\title{Towards an Approximate Conformance Relation for \\ Hybrid I/O Automata}
\author{Morteza Mohaqeqi
\institute{Department of Information Technology\\
Uppsala University, Sweden}
\email{morteza.mohaqeqi@it.uu.se} 
\and
Mohammad Reza Mousavi
\institute{Centre for Research on Embedded Systems,\\
School of IT, Halmstad University, Sweden}
\email{m.r.mousavi@hh.se}
}
\newtheorem{definition}{Definition}
\newtheorem{example}[definition]{Example}
\newtheorem{theorem}[definition]{Theorem}
\newenvironment{proof}{Proof}{\hfill $\Box$}
\begin{document}
\maketitle

\begin{abstract}
Several notions of conformance have been proposed for checking the behavior of cyber-physical systems against their hybrid systems models.
In this paper, we explore the initial idea of a notion of approximate conformance that allows for comparison of both observable discrete actions and (sampled) continuous trajectories. 
As such, this notion will consolidate two earlier notions, namely the notion of Hybrid Input-Output Conformance (HIOCO) by M.\ van Osch and the notion of Hybrid Conformance by H.\ Abbas and G.E.\ Fainekos. We prove that our proposed notion of conformance satisfies a semi-transitivity property, which makes it suitable for a step-wise proof of conformance or refinement. 
\end{abstract}

\section{Introduction}\label{sec:into} 

Conformance testing is a practical and rigorous technique for verifying system correctness against a specification. 
In the context of cyber-physical systems, several notions of conformance have been proposed for testing such systems against their hybrid systems models, of which  hybrid input-output conformance (\textit{hioco}) \cite{vanOsch2006} and $(\tau, \epsilon)$-hybrid conformance (\textit{hconf}) \cite{Abbas14,AbbasThesis15} are two notable examples. 
As we noted earlier in  \cite{Mohaqeqi2014,Khakpour2015}, these two notions are substantially different in that hconf allows for approximate comparison of system dynamics, while hioco provides direct support for discrete actions and non-deterministic and partial specifications. Hence, as indicated in \cite{Mohaqeqi2014,Khakpour2015},  combining the advantages of these approaches will be beneficial and will lead to a flexible and  realistic definition of conformance with the above-mentioned characteristics.

In this paper, we define an approximate notion of conformance that allows for approximate comparison of both  (sampled) continuous signals and observable discrete actions. To this end, we take the notion of hconf \cite{Abbas14,AbbasThesis15}, and interpret discrete actions as special signals with discrete values ($0$ and $\infty$). As a first sanity check, we set out to prove that our notion of conformance is a pre-order (i.e., it is reflexive and transitive), making it suitable for step-wise conformance testing and refinement. Reflexivity immediately follows from the definition; transitivity, however, requires a twist in accumulating the conformance bounds (error margins). 
We formulate the notion of semi-transitivity and prove that our proposed notion is indeed semi-transitive. 

We illustrate the definitions using a simple example of a thermostat.  The thermostat has two input discrete actions ON and OFF and one output (continuous) variable $x$ denoting the current temperature. The dynamics of the temperature are different depending on the state of the thermostat, which is in turn determined by the discrete input actions. We provide a formalization of an ideal thermostat in hybrid (I/O) automata and analayze the conformance of its implementation with respect to the specification.

The rest of this paper is organized as follows.
In Section \ref{sec:rel}, we present an overview of the related work and position our approach with respect to the state of the art in the literature.  
In Section \ref{sec:pre}, we introduce the preliminary definitions regarding hybrid I/O automata and their traces. In Section \ref{sec:hioco}, we review the notion of hioco. 
In Section \ref{sec:ext}, we first review the notion of hconf and show how it can be extended to treat conformance for systems with discrete actions. Additionally, we state and prove it semi-transitivity property. In Section \ref{sec:conc}, we conclude the paper and present some directions for future research. 

\section{Related Work}\label{sec:rel}

There is a substantial literature on approximate notions of behavioral equivalence and/or pre-order for hybrid systems. We have already mentioned two examples of such notions, namely
hybrid input-output conformance (\textit{hioco}) \cite{vanOsch2006} and $(\tau, \epsilon)$-hybrid conformance (\textit{hconf}) \cite{Abbas14,AbbasThesis15}. Additionally, there are a number of notions of approximate equivalence / pre-order based on Metrics such as \cite{Girard2011,Julius2009,Tabuada07}. 
In \cite{Mohaqeqi2014,Khakpour2015}, we formally relate some of the basic notions of conformance for hybrid systems and their underlying semantic models. 

We expect the extension of our notion to the non-deterministic case to be straightforward; in \cite{vanOsch2006} and  \cite{AbbasThesis15} already non-deterministic models are considered. 
The extension to richer semantic domains, e.g., with probabilistics or stochastics, are far less trivial. We refer to \cite{Julius09,Zamani15} for related ideas in this direction.

\section{Preliminaries}\label{sec:pre} 

This section introduces the basic definitions and notations used throughout the rest of the paper, including a formal definition of hybrid I/O automata and their associated traces. 

\subsection{Notation and Basic Definitions}

We denote the set of real numbers by $\mathbb{R}$ and the set of non-negative integers by $\mathbb{N}$.
Given  a tuple $\alpha=(a_1,a_2,\ldots,a_n)$, the $i$th element of the tuple is denoted by $\gamma_i(\alpha)$, that is, $\gamma_i(\alpha)=a_i$. 

For a function $f$, we denote its domain by $\mathit{dom}(f)$. For a set $S$, the restriction of $f$ to $S$, denoted by $f \lceil S$, is a function with the domain $\mathit{dom}(f) \cap S$, where $(f \lceil S)(c) = f(c), \forall c \in S$.

In order to describe the system dynamics, we define and use the following notions of valuation and trajectory. 

\begin{definition}[Valuation]
Consider a set of real-valued variables $V$. A valuation of $V$ is a function of type $V \mapsto \mathbb{R}$, which assigns a real number to each variable $v \in V$. The set of all valuations of $V$ is denoted by $\mathit{Val}(V)$.
\end{definition}

\begin{definition}[Trajectory]
Let $D \subset \mathbb{R}$ be an interval. A trajectory $\sigma$ is a function $\sigma : D \rightarrow \mathit{Val}(V)$ that maps each element
in interval $D$ to a valuation. The set of all trajectories associated to $V$ is denoted by $\mathit{trajs}(V)$.
\end{definition}

Consider a set of variables $V$ and the corresponding set of valuations $\mathit{Val}(V)$. The \textit{restriction} of a valuation $\mathit{val} \in \mathit{Val}(V)$ to a set $V' \subseteq V$, denoted by $\mathit{val}\downarrow V'$, is a valuation $\mathit{val}' \in \mathit{Val}(V')$ where for all $v \in V'$, $\mathit{val}'(v)=\mathit{val}(v)$. Also, the restriction of a trajectory $\sigma : D \rightarrow \mathit{Val}(V)$ to a set $V' \subseteq V$, denoted by $\sigma \downarrow V'$, is a trajectory $D \rightarrow \mathit{Val}(V')$ where $(\sigma \downarrow V')(t)=\sigma(t) \downarrow V'$ for all $t \in D$.

For a trajectory $\sigma$, the first valuation of $\sigma$ (i.e., $\sigma(min(\mathit{dom}(\sigma)))$), if it exists, is denoted by $\sigma.\mathit{fval}$. Moreover, if $\mathit{dom}(\sigma)$ is right-closed, then  $\sigma.\mathit{lval}$ is defined as the last valuation of $\sigma$.

For an interval $D \subseteq \mathbb{R}$ and an arbitrary $ t \geq 0$, we define $D+t \triangleq \{t' + t | t' \in D\}$ as the time shift of $D$. Moreover, for a trajectory $\sigma$ with domain $D$, $\sigma+t$ is defined as a function with domain $D+t$ such that $\forall t' \in D: (\sigma+t)(t' + t)=\sigma(t')$.

For a trajectory $\sigma$, define $\tau \unrhd t \triangleq (\tau \lceil [t,\infty))-t$. Additionally, trajectory $\sigma$ is a prefix of trajectory $\sigma'$, denoted as $\sigma \leq \sigma'$, if and only if $\sigma=\sigma' \lceil \mathit{dom}(\sigma)$. Also, the concatenation of two trajectories  $\sigma$ and $\sigma'$ is defined by $\sigma ^\frown \sigma' \triangleq \sigma \cup \sigma'$.


\subsection{Hybrid Automata}

Hybrid I/O automata (HIOA) \cite{lynch2003hybrid} are defined as the extension of hybrid automata \cite{Alur1993} with an additional classification of external actions and variables as inputs and outputs. Hence, we first review the definition of hybrid automata and then introduce its extension to input and output actions and variables.

\begin{definition}[Hybrid Automata]
A hybrid automaton (HA) $\mathcal{A}=(W,X,Q,\Theta,E,H,D,\mathscr{T}$) consists of
\begin{itemize}
\item
A set $W$ of external variables and a set $X$ of internal variables, disjoint from each other. We write $V \equiv W \cup X$.

\item
A set $Q \subseteq \mathit{Val}(X)$ of states.

\item
A nonempty set $\Theta \subseteq Q$ of start states.

\item
A set $E$ of \textit{external actions} and a set $H$ of \textit{internal actions}, disjoint from each other. We write $A \equiv E \cup H$. 

\item 
A set $D \subseteq Q \times A \times Q$ of \textit{discrete transitions}. We use $\mathbf{x} \stackrel{a}{\rightarrow}_\mathcal{A} \mathbf{x'}$ as shorthand for $(\mathbf{x}, a, \mathbf{x'}) \in D$. We sometimes drop the subscript and  write $\mathbf{x} \stackrel{a}{\rightarrow} \mathbf{x'}$, when we think $\mathcal{A}$ should be clear from the context. We say that $a$ is enabled in $\mathbf{x}$ if there exists an $\mathbf{x'}$ such that $\mathbf{x} \stackrel{a}{\rightarrow} \mathbf{x'}$.

\item
A set $\mathscr{T}$ of trajectories for $V$ such that $\tau(t) \lceil X \in Q$ for every $\tau \in \mathscr{T}$ and $t \in \mathit{dom}(\tau)$. Given a trajectory $\tau \in \mathscr{T}$ we denote $\tau.\mathit{fval} \lceil X$ by $\tau.\mathit{fstate}$ and, if $\tau$ is closed, we denote $\tau.\mathit{lval} \lceil X$ by $\tau.\mathit{lstate}$. 
We require that the following axioms hold:

\textbf{T1} (\textit{Prefix closure})

\quad For every $\tau \in \mathscr{T}$ and every $\tau' \leq \tau$, $\tau' \in \mathscr{T}$.

\textbf{T2} (Suffix closure)
\quad
For every $\tau \in \mathscr{T}$ and every $t \in \mathit{dom}(\tau)$, $\tau \unrhd t \in \mathscr{T}$

\textbf{T3} (Concatenation closure)
\quad
Let $\tau_0$, $\tau_1$, $\tau_2$, $\ldots$ be a sequence of trajectories in $\mathscr{T}$ such that, for each nonfinal index $i$, $\tau_i$ is closed and $\tau_i.\mathit{lstate} = \tau_{i+1}.\mathit{fstate}$. Then $\tau_0 ^\frown \tau_1 ^\frown \tau_2 \ldots \in \mathscr{T}$.
\end{itemize}
\end{definition}

\begin{example}
\label{ex:HA}
Figure \ref{subfig:HAofTHERM}, due to \cite{Aerts16}, shows the hybrid automaton of the thermostat described in Section \ref{sec:into} 
 with the set of variables of $W=\{y\}$, $X=\{l, x\}$, where $l$ is a two-valued variable (mode\_ON and mode\_OFF) determining the \textit{location} of the automaton, $x$ is a continuous real-valued variable, and $y$ is always equal to $x$, i.e., $y(t)=x(t), \forall t \geq 0$. Further, the set of actions is specified as $E=\{\mathit{ON}\, \mathit{OFF}\}$, and $H=\{\}$. Also, we have 
\begin{equation}
Q=\{(l,x) \mid l \in \{mode\_ON, mode\_OFF \}, \ 0 \leq x \leq 20  \} \nonumber
\end{equation} 
and  $\Theta = \{(\mathit{mode\_ON},5)\}$.
Trajectories generated by this HIOA are specified according to the differential equations given for each location. Figure \ref{subfig:TRAJofTHERM} depicts a set of trajectories for variable $x$ (or equivalently external variable y). 
\end{example}

\begin{figure}
\centering
\subfloat[Hybrid automaton of the thermostat
\label{subfig:HAofTHERM}]{%
\begin{tikzpicture}[fill=white!20, align=center,scale=1,->]
	\path (1.1,6) node(a) [circle,text width=2.5cm,draw] { };
	\path (0.0,6.3) node(d)[text width=1cm]  {
		\fontsize{9}{9} 
		\begin{eqnarray}
		&\mathrm{mode\_ON} & \nonumber \\ 
		&\dot{x}(t)=-x(t)+20 & \nonumber \\ 
		&x(t) \leq 20	&\nonumber
		\end{eqnarray} 
		};
	\path (6.8,6) node(b) [circle,text width=2.5cm,draw] { };
	\path (6.0,6.3) node(d)[text width=1cm]  {
		\fontsize{9}{9} 
		\begin{eqnarray}
		&\mathrm{mode\_OFF} & \nonumber \\ 
		&\dot{x}(t)=-x(t) & \nonumber \\ 
		&x(t) \geq 0	&\nonumber
		\end{eqnarray} 
		};
	\draw[thick,black,>=stealth] (b) .. controls +(150:2cm) and +(30:2cm) .. node[above] {\textit{ON} ($x(t) \leq 2$)}  (a);
	\draw[thick,black,>=stealth] (a) .. controls +(-30:2cm) and +(-150:2cm) .. node[below]{\textit{OFF} ($x(t) \geq 18$)} (b);
\end{tikzpicture}
}\quad
\subfloat[A sample of the continuous dynamics of the system\label{subfig:TRAJofTHERM}]{
\begin{tikzpicture}[scale=0.7]
\begin{axis}[grid=both,
          xmin=0,ymin=0,xmax=10,ymax=20,
          xlabel=$t$,
		  ylabel=$x$($t$)
          ]
\addplot[red,domain=0:2,samples=20]  {-15*exp(-x)+20};
\addplot[blue,domain=2:4.2,samples=20]  {18*exp(2-x)+0};
\addplot[red,domain=4.2:6.4,samples=20]  {-18*exp(4.2-x)+20};
\addplot[blue,domain=6.4:8.6,samples=20]  {18*exp(6.4-x)+0};
\addplot[red,domain=8.6:10,samples=10]  {-18*exp(8.6-x)+20};
\addplot[white](.7,16) node[red,above]  {$\tau_0$};
\addplot[white](3.3,10) node[blue,above]  {$\tau_1$};
\addplot[white](5.3,16) node[red,above]  {$\tau_2$};
\addplot[white](7.7,10) node[blue,above]  {$\tau_3$};
\addplot[white](9.2,15) node[red,above]  {$\tau_4$};
\end{axis}
\end{tikzpicture}
}
\caption{Thermostat example}\label{fig:thermos}
\end{figure}
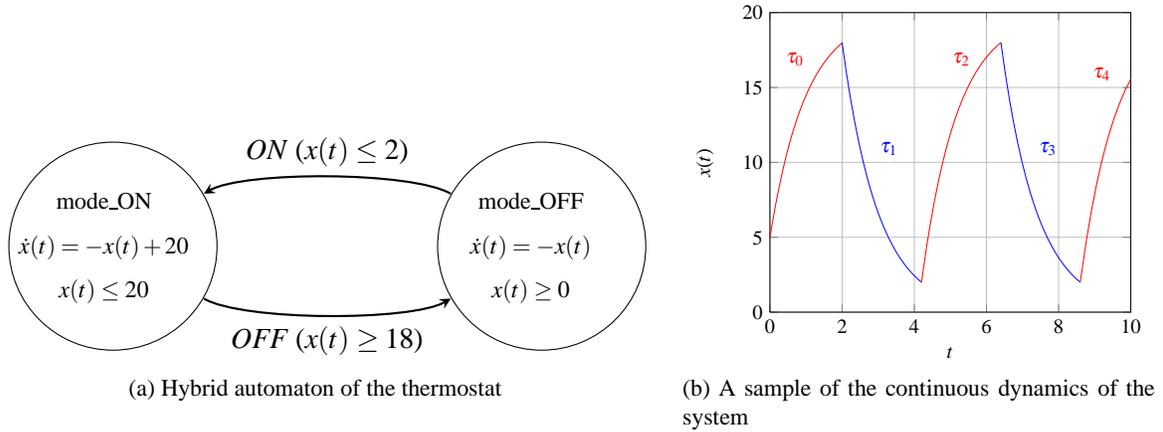

\subsection{Hybrid I/O Automata}

As mentioned before, an HIOA is a hybrid automaton in which the set of variables and actions are classified as input and output, as specified below.

\begin{definition}[Hybrid I/O Automata \cite{lynch2003hybrid}]
A hybrid I/O automaton (HIOA) is a tuple ($\mathcal{H},W_I,W_O,E_I,E_O$) where
\begin{itemize}
\item
$\mathcal{H}=(W,X,Q,\Theta,E,H,D,\mathscr{T})$ is a hybrid automaton.

\item
$W_I$ and $W_O$ partition $W$ into \textit{input} and \textit{output} variables, respectively. Variables in $Z\equiv X \cup W_O$ are called locally controlled.

\item
$E_I$ and $E_O$ partition $E$ into \textit{input} and \textit{output} actions, respectively. Actions in $L\equiv H \cup E_O$ are called locally controlled.

\item 
The following additional axioms are satisfied:
\begin{itemize}
\item
\textbf{E1} (Input action enabling)

For every  $\mathbf{x} \in Q$ and every $a \in E_I$, there exists $\mathbf{x'} \in Q$ such that $\mathbf{x} \stackrel{a}{\rightarrow} \mathbf{x'}$.

\item
\textbf{E2} (Input trajectory enabling)

For every  $\mathbf{x} \in Q$ and every $v \in \mathit{trajs}({W_I})$, there exists $\tau \in \mathscr{T}$ such that $\tau.\mathit{fstate} = \mathbf{x}$, 
$\tau \downarrow {W_I} \leq v$ and either
\begin{enumerate}
\item
$\tau \downarrow {W_I} = v$, or

\item
$\tau$ is closed and some $l \in L$ is enabled in $\tau.\mathit{lstate}$.

\end{enumerate}

\end{itemize}
\end{itemize}
\end{definition}

\begin{example}
\label{ex:HIOA}
Consider the hybrid automaton described in Example \ref{ex:HA}. Classifying the set of variables into input and output variables $W_I=\{\}$, $W_O=\{y\}$ and the set of actions into input and output actions as $E_I=\{\mathit{ON}\}$, $E_O=\{\mathit{OFF}\}$ reveals the corresponding HIOA.
\end{example}

{For an HIOA, a state is said to be \textit{agile} if it affords some trajectory, i.e., there exists trajectory $\sigma \in \mathscr{T}$ for which $s \stackrel{\sigma}{\rightarrow} $. To show that a state is agile, we add a spacial action $\xi$ to the set of output actions $E_O$ (the usage of this action is explained in the next section).}

\subsection{Hybrid sequences, executions, traces and solution pairs}
To describe the behavior of a hybrid (I/O) automaton, we use the notions of execution, trace and solution pair. To this end, we first define a notation of hybrid sequence.

\begin{definition}[Hybrid Sequence \cite{lynch2003hybrid}]
Take a set of variables $V$ and a set of actions $A$. Then, an ($V,A$)-sequence is defined as a finite or infinite sequence $\alpha=\tau_0,a_1,\tau_1,a_2,\ldots$, in which
\begin{itemize}
\item
each $\tau_i$ denotes a trajectory in $\mathit{trajs(V)}$;
\item
each $\alpha_i$ denotes an action in $A$;
\item
if $\alpha$ is finite then it ends with a trajectory;
\item
if $\tau_i$ is not the last trajectory in the sequence then  $\mathit{dom}(\tau_i)$  is closed.
\end{itemize}

Any ($V,A$)-sequence is called a hybrid sequence.
\end{definition}

Next, we define the notions of execution fragment and execution for a hybrid I/O automaton. 

\begin{definition}[Execution Fragment and Execution \cite{lynch2003hybrid}]
A hybrid sequence $\alpha=\tau_0,a_1,\tau_1,a_2,\ldots$ is an execution fragment for an HIOA $A$ if (i) each $\tau_i$ is a trajectory in $\mathscr{T}$, and (ii) if $\tau_i$ is not the last trajectory then $\tau_i.\mathit{lstate} \stackrel{a_{i+1}}{\rightarrow} \tau_{i+1}.\mathit{fstate}$. 
For an execution fragment $\alpha$, we define $\alpha.\mathit{fstate}$ to be $\tau_{0}.\mathit{fstate}$.

An execution fragment $\alpha=\tau_0,a_1,\tau_1,a_2,\ldots$ is called an \emph{execution} of a hybrid automata $\mathcal{A}$  if 
$\tau_0.\mathit{fstate}$ is a start state, i.e., $\tau_0.\mathit{fstate} \in \Theta$.

We write $\mathit{execs}_\mathcal{A}$ to denote the set of all executions of $\mathcal{A}$.
\end{definition} 

To capture the \textit{external} behavior of a hybrid (I/O) automaton, we use the notion of \textit{trace}. Intuitively, a trace of an execution specifies the sequence of its external actions and the evolution of the external variables.
\begin{definition}[Trace \cite{lynch2003hybrid}]
\label{def:trace}
Consider an execution $\alpha$. The trace of $\alpha$, denoted by $\mathit{trace}(\alpha)$, is the ($E,W$)-restriction of $\alpha$ (It is recalled that $E$ and $W$ denote the set of external actions and external variables, respectively).
\end{definition}

\begin{example}
\label{ex:trace}
Figure \ref{subfig:TRAJofTHERM} depicts a set of trajectories for variable $x$ of the HIOA specified in Example \ref{ex:HIOA}. According to this sample behavior, $\alpha=\tau_0,OFF,\tau_1,ON,\tau_2$ specifies a trace of the HIOA.
\end{example}

\begin{definition}[Solution Pair \cite{AbbasThesis15}]
\label{def:sol}
Assume that $u$ and $y$ are two traces for a HIOA $\mathscr{H}$; ($u$, $y$) is a solution pair to $\mathscr{H}$ if 
\begin{itemize}
\item
$\mathrm{dom}(u)=\mathrm{dom}(y)$, and 
\item
there exists a trace $\alpha$ to $\mathscr{H}$ such that $\mathrm{dom}(\alpha)=\mathrm{dom}(u)$, $u=\alpha \downarrow W_I$, and $y=\alpha \downarrow W_O$. 
\end{itemize}
\end{definition}

\section{I/O Conformance for Hybrid I/O Automata}\label{sec:hioco}
In this section, we review the notion of hioco \cite{vanOsch2006}, which is an exact notion of conformance, and adopt it for the HIOA formalism.
Subsequently, in the next section, we will adapt the ideas explored in this section to 
come up with an approximate notion of conformance.  
To this end, we first define the output and the trajectories associated with a state of an HIOA. 

\begin{definition}
Assume that  $\mathcal{A}=$ ($\mathcal{H},W_I,W_O,E_I,E_O$) is a hybrid IO automaton with the sets $Q$ and $\Theta$ of states and starting states, respectively. For a state $s \in \Theta$ we define
\begin{eqnarray}
\mathbf{out}(s)=\left\lbrace \begin{array}{ll}
                  \{o \in E_O \mid \{\xi\}\cup s\stackrel{o}{\rightarrow}\} 
                  			& \mathrm{, if \ } s \mathrm{\ is \ agile} \\ 
                  \{o \in E_O \mid s \stackrel{o}{\rightarrow} \} 
                  			& \mathrm{, otherwise}
                \end{array}
              \right.
\end{eqnarray}
Furthermore, for a set $C \subseteq Q$ we define:
\begin{equation}
\mathbf{out}(C) = \bigcup_{s \in C}  \mathbf{out}(s). \nonumber
\end{equation}
\end{definition}

\begin{definition}[$\mathbf{after}$ operator]
For an HIOA $\mathcal{A}$ and a trace $t$, $\mathbf{after}$ operator is defined as
\begin{equation}
\mathcal{A} \mathbf{after}\ t = \{ s \mid s_0 \stackrel{t}{\rightarrow}  s \}
\end{equation}
\end{definition}

\begin{definition}[Trajectories of a state]
For an HIOA $\mathcal{A}$ and a state $s$, $\mathbf{traj}(s)$ is defined as
\begin{equation}
\mathbf{traj} (s) = \{ \sigma \in \mathscr{T} \mid s \stackrel{\sigma}{\rightarrow}  \}
\end{equation}
\end{definition}

In order to define conformance, we need to focus on those  trajectories of the system under test for which a trajectory with the same behavior exists in the specification. 
This is achieved by the following infilter operator. 

\begin{definition}[$\mathbf{infilter}$]
Consider two sets of trajectories $\Sigma_I$ and $\Sigma_S$ on a set of variables $V$. Assume $V_I \subseteq V$ is the set of input variables; then
\begin{equation}
\mathbf{infilter} (\Sigma_I, \Sigma_S) = \{ \sigma \in \Sigma_I \mid \exists_{\sigma' \in \Sigma_S} : \sigma \downarrow V_I = \sigma' \downarrow V_I  \}
\end{equation}
\end{definition}

Finally, the notion of HIOCO is defined HIOA below. 

\begin{definition}[extension of $\mathbf{hioco}$\label{def:hioco}]
An (input-enabled) HIOA $I$ is said to hybrid input-output conform to another HIOA  $S$, denoted by $I\ \mathbf{hioco}\ S$, if and only if for all traces $t \in \mathit{traces}(S)$:
\begin{eqnarray}
& & \mathbf{out}(I\ \mathbf{after}\ t) \subseteq \mathbf{out}(S\ \mathbf{after}\ t), \mathrm{and} \nonumber \\ 
& & \mathbf{infilter} \left(\mathbf{traj}(I\ \mathbf{after}\ t),\mathbf{traj}(S\ \mathbf{after}\ t)\right) \subseteq \mathbf{traj}(S\ \mathbf{after}\ t)\nonumber
\end{eqnarray}
\end{definition}

\begin{example}
\label{ex:hioco}
Assume that we refer to the HIOA specified in Example \ref{ex:HIOA} by $S$. As specified in Example \ref{ex:trace}, $\alpha=\tau_0,\textit{OFF},\tau_1,\textit{ON},\tau_2$ is a trace of $S$. Then, we have $\mathbf{out}(S\ \mathbf{after}\ \alpha) = \{\textit{OFF} \}$. 
Now, assume an imaginary HIOA $I$ which generates a trace $\alpha'=\tau_0,\textit{OFF},\tau_1,\textit{ON},\tau_2,\textit{ON}$. 
It is seen that $\textit{ON} \in \mathbf{out}(I\ \mathbf{after}\ \alpha)$. As a consequence, $I$  does not conform to $S$ since $\mathbf{out}(I\ \mathbf{after}\ \alpha) \not\subseteq \mathbf{out}(S\ \mathbf{after}\ \alpha)$.
\end{example}

\section{Approximate Conformance for Hybrid I/O Automata}\label{sec:ext}

As it can be noted in Definition \ref{def:hioco}, 
HIOCO considers an implementation to be conforming when it exactly follows (features a non-empty subset of the behavior of) the specification. However, due to (unavoidable) inaccuracies during the implementation and testing of a CPS,  this is not a feasible and practical approach. We wish to allow for \textit{limited} deviations of the implementation from the specification. In this regard, Abbas et al. \cite{Abbas14,AbbasThesis15} define a notion of closeness which is measures the degree of similarity between the observed behavior of two systems. However, the approach ignores the input and output actions and mainly focuses on the continuous behavior of the system. In the remainder of this section, we first review the hybrid conformance relation proposed by Abbas et al.\ \cite{Abbas14,AbbasThesis15} and then extend it, following the recipe of HIOCO, with discrete input and output actions.

\subsection{Conformance Relation based on the Notion of Closeness}

In the hybrid conformance approach \cite{Abbas14,AbbasThesis15}, the behavior of a system is specified using a notion of trace in which the types of observable discrete actions are not explicitly specified.  Instead, the number of discrete jumps are considered relevant in the definition of hybrid conformance. This kind of trace, which we call an \textit{action-insensitive trace} (or an \textit{a-trace}) hereafter, is defined based on the notion of hybrid time domain, defined below.

\begin{definition}[Hybrid Time Domain \cite{AbbasThesis15}]
A hybrid time domain $E$ is a subset of $\mathbb{R}_{+} \times \mathbb{N}$ defined as
\begin{equation}
E=\bigcup_{j=0}^{J-1}[t_j,t_{j+1}]\times\{j \}
\end{equation}
where $0=t_0 \leq t_1 \leq t_2 \leq ... \leq t_J$. We denote the set of all hybrid time domains by $\mathbb{T}$. 
\end{definition}

The hybrid time domain is used to specify the evolution of a CPS regarding both continuous evolution (using time intervals $[t_j,t_{j+1}]$) and discrete jumps (using integer numbers $j$). Next, we define the notion of action-insensitive trace which is subsequently used to define the approximate notion of conformance.\footnote{In \cite{AbbasThesis15}, the term \textit{Hybrid Arc} is used to refer to this concept.}

\begin{definition}[Action-Insensitive Trace \cite{AbbasThesis15}]
Take a hybrid time domain $E$ and a set of variables $V$. An action-insensitive trace (a-trace) over $E$ is a function $\phi: E \rightarrow \mathit{Val}(V)$, where  for every $j$, $t \mapsto \phi(t,j)$ is absolutely continuous in $t$ over the interval $I_j=\{t | (t,j) \in E \}$. The set of all a-traces defined over the variable set $V$ is denoted by a-trace$(V)$. 
\end{definition}

As it can be noted, the notion of a-trace is an abstraction of a trace, as defined in Definition \ref{def:trace}. In fact, an a-trace can be obtained from a given trace by removing its actions and keeping only the number of jumps.

\begin{definition}[($\tau$,$\epsilon$)-closeness \cite{AbbasThesis15}]
\label{def:closeness}
Consider a test duration $T \in \mathbb{R}_{+}$, a maximum number of jumps $J \in \mathbb{N}$, and $\tau,\epsilon > 0$; then two a-traces $y_1$ and $y_2$ are said to be  ($\tau$,$\epsilon$)-close, denoted by $y_1 \approx_{(\tau,\epsilon)} y_2$, if 
\begin{enumerate}
	\item
	for all $(t,i) \in \mathrm{dom}(y_1)$ with $t\leq T, i \leq J$, there exists $(s,j) \in \mathrm{dom}(y_2)$  such that  $|t-s| \leq \tau$ and $\|y_1(t,i) -y_2(s,j)\| \leq \epsilon$, and

	\item
	for all $(t,i) \in \mathrm{dom}(y_2)$ with $t\leq T, i \leq J$, there exists $(s,j) \in \mathrm{dom}(y_1)$  such that  $|t-s| \leq \tau$ and $\|y_2(t, i) -y_1(s, j)\| \leq \epsilon$.
\end{enumerate}
\end{definition}

\begin{definition}[Conformance Relation \cite{AbbasThesis15}]
\label{def:conformance}
Consider two hybrid I/O automata $\mathscr{H}_1$ and $\mathscr{H}_2$. Given a test duration $T \in \mathbb{R}_{+}$, a maximum number of jumps $J \in \mathbb{N}$, and $\tau,\epsilon > 0$, \emph{$\mathscr{H}_2$ conforms to $\mathscr{H}_1$}, denoted by $\mathscr{H}_2 \approx_{(\tau,\epsilon)} \mathscr{H}_1$, if and only if for all solution pairs $(u,y_1)$ of $\mathscr{H}_1$,
 there exists a solution pair $(u, y_2)$ of $\mathscr{H}_2$ such that the corresponding output a-traces $y_1$ and $y_2$ are ($\tau$,$\epsilon$)-close.
\end{definition}

{Figure \ref{fig:hconf} shows two a-traces $y_1$ and $y_2$ where  
$y_1 \approx_{(0.8, 1)} y_2$. As mentioned, the notion of hconf ($\approx_{(\tau, \epsilon)}$) is not sensitive to the existence/absence of actions. As a result, if the HIOA generating $y_1$ produces  output action \textit{OFF} at $t=2$ and the other HIOA produces no action, the two a-traces are still regarded as conforming according to hconf.}

\begin{figure}
\centering
\begin{tikzpicture}[scale=0.7]
\begin{axis}[grid=both,
          xmin=0,ymin=0,xmax=10,ymax=20,
          xlabel=$t$,
		  ylabel=$y$($t$)
          ]
\addplot[domain=0:2,samples=20]  {-15*exp(-x)+20};
\addplot[domain=0.4:2.5,samples=20,dashed]  {-15*exp(0.5-x)+20-1};
\addlegendentry{$y_1$}
\addlegendentry{$y_2$}

\addplot[domain=2:4.2,samples=20]  {18*exp(2-x)+0};
\addplot[domain=4.2:6.4,samples=20]  {-18*exp(4.2-x)+20};
\addplot[domain=6.4:8.6,samples=20]  {18*exp(6.4-x)+0};
\addplot[domain=8.6:10,samples=10]  {-18*exp(8.6-x)+20};

\addplot[domain=2.5:4.7,samples=20,dashed]  {18*exp(2.5-x)+0-1};
\addplot[domain=4.7:6.9,samples=20,dashed]  {-18*exp(4.7-x)+20-1};
\addplot[domain=6.9:9.1,samples=20,dashed]  {18*exp(6.9-x)+0-1};
\addplot[domain=9.1:10,samples=10,dashed]  {-18*exp(9.1-x)+20-1};

\addplot[domain=2:3.5,samples=10,dotted]  {18};
\addplot[domain=2.5:3.5,samples=10,dotted]  {17};
\addplot[white](3.8,17.5) node[black]  {$1$};

\addplot[white](5.5,18.7) node[black]  {$0.5$};

\addplot[white](1,16) node[black]  {$\tau_1$};
\addplot[white](3,5) node[black]  {$\tau_2$};

\addplot[white](0.9,5) node[black]  {$\tau'_1$};
\addplot[white](3.3,10) node[black]  {$\tau'_2$};
\end{axis}

\draw[black,->] (2.3,4.6) -- (2.3,4.8); 
\draw[black,->] (2.3,5.35) -- (2.3,5.15); 

\draw[black,->] (4.15,5.35) -- (4.35,5.35); 
\draw[black,->] (5,5.35) -- (4.8,5.35); 

\draw[black,dotted] (4.75,4.8) -- (4.75,5.6); 
\draw[black,dotted] (4.4,5) -- (4.4,5.6); 

\end{tikzpicture}
\caption{Two signals which are $\approx_{\tau,\epsilon}$ for $\tau=0.8$ and $\epsilon=1$ but not for $\tau=0.8$ and $\epsilon=0.4$.}\label{fig:hconf}
\end{figure}
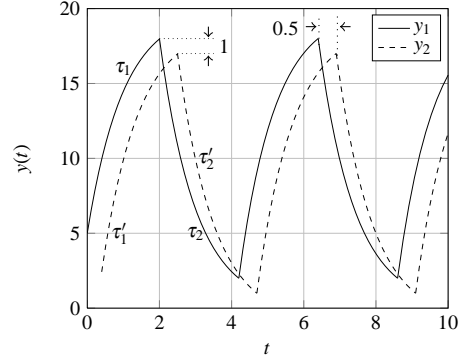

\subsection{Extension}

The goal of this section is to extend the notion of closeness and the corresponding conformance relation to the case of hybrid I/O automata. The main difference of our extended conformance with that of Abbas et al. \cite{Abbas14} is that 
we include the input and output actions for the comparison of two behaviors 
(i.e., the comparison of specification and implementation).



According to Definition \ref{def:closeness}, the notion of closeness is specified on the basis of comparing a-traces. However, as mentioned in the previous section, an a-trace is an abstraction of a trace of a HIOA, in which, the actions are not explicitly specified. As a result, a natural way for extending the notion of closeness to HIOA is to augment the closeness definition (Definition \ref{def:closeness}) with additional conditions that also compare the corresponding actions of the traces. 

We extend the definition of a-trace such that it can reflect the occurrence of  actions as well as the change of continuous variables. For this purpose, we consider each action as a special continuous variables. This variable is assumed to get only two values: $0$ and $\infty$, where a value of $0$ denotes the absence of the corresponding action while the value of $\infty$ denotes an instantaneous occurrence of the action. 

To this end, we first assume that the set of real numbers is augmented with a special number $\infty$ which is regarded to be larger than any number in $\mathbb{R}$, and hence larger than any arbitrary $\epsilon$. Further, for notational convenience,  we define $\infty-\infty=0$.
 Let $\mathbb{R}_\infty$ denote $\mathbb{R}$ extended with $\infty$. 
 
We extend the notion of trace and solution pair (Definitions \ref{def:trace} and \ref{def:sol}) such that for any action in a given HIOA, there is a  fresh variable. Such variables get only two values, namely 0 and $\infty$. This naturally extends the notion of a-trace to our notion of trace which has the same type (apart from the fact that $Val(V)$ ranges over  $\mathbb{R}_\infty$); the only difference is that the set of variables also includes the freshly introduced action variables.  

In order to extend the notion of closeness (and thus, hconf conformance relation), we need to define the meaning of a norm ($\| \|$) for the signals containing a mixture of normal real-valued variables and these extended type of variables.
Assume $y_1$ and $y_2$ are two vector of variables of the same length, i.e., 
\begin{eqnarray}
y_1=(y_1[1],\ldots, y_1[n])  \nonumber \\
y_2=(y_2[1],\ldots, y_2[n])  \nonumber 
\end{eqnarray}
We define the distance of $y_1$ and $y_2$, denoted by $\| y_2 - y_1 \|_e$, as
\begin{eqnarray}
\label{eq:normExt}
\| y_2 - y_1 \|_e = 
	\begin{cases}
		\infty,  & \exists i \in \{1, \ldots, n \} s{.}t{.} (y_1[i] = \infty \ and \ y_2[i] \neq \infty)\ or\ (y_1[i] \neq \infty \ and\ y_2[i] = \infty)\\
		\| y_2 - y_1 \|, & \text{otherwise}
	\end{cases}
\end{eqnarray}
where $\| y_2 - y_1 \|$ denotes the ordinary Euclidean distance.
Then, we can modify the definition of ($\tau$,$\epsilon$)-closeness based on this notion of distance.

{
\begin{definition}[Modified ($\tau$,$\epsilon$)-closeness]
\label{def:modifCloseness}
Consider a test duration $T \in \mathbb{R}_{+}$, a maximum number of jumps $J \in \mathbb{N}$, and $\tau,\epsilon > 0$; then two a-traces $y_1$ and $y_2$ are said to be  ($\tau$,$\epsilon$)-close, denoted by $y_1 \approx_{(\tau,\epsilon)}^e y_2$, if 
\begin{enumerate}
	\item
	for all $(t,i) \in \mathrm{dom}(y_1)$ with $t\leq T, i \leq J$, there exists $(s,j) \in \mathrm{dom}(y_2)$  such that  $|t-s| \leq \tau$ and $\|y_1(t,i) -y_2(s,j)\|_e \leq \epsilon$, and
	\item
	for all $(t,i) \in \mathrm{dom}(y_2)$ with $t\leq T, i \leq J$, there exists $(s,j) \in \mathrm{dom}(y_1)$  such that  $|t-s| \leq \tau$ and $\|y_2(t, i) -y_1(s, j)\|_e \leq \epsilon$.
\end{enumerate}
\end{definition}
}

{
\begin{definition}[Conformance Relation]
\label{def:modifConformance}
Consider two hybrid I/O automata $\mathscr{H}_1$ and $\mathscr{H}_2$. Given a test duration $T \in \mathbb{R}_{+}$, a maximum number of jumps $J \in \mathbb{N}$, and $\tau,\epsilon > 0$, \emph{$\mathscr{H}_2$ conforms to $\mathscr{H}_1$}, denoted by $\mathscr{H}_2 \approx_{(\tau,\epsilon)}^e \mathscr{H}_1$, if and only if for all solution pairs $(u,y_1)$ of $\mathscr{H}_1$,
 there exists a solution pair $(u, y_2)$ of $\mathscr{H}_2$ such that the corresponding output a-traces $y_1$ and $y_2$ are ($\tau$,$\epsilon$)-close, considering the modified notion of closeness in Definition \ref{def:modifCloseness}.
\end{definition}
}

\begin{example}
{Consider two traces $\alpha=\tau_1,\text{OFF},\tau_2$ and $\alpha'=\tau'_1,\tau'_2$ where $\tau_1$, $\tau_2$, $\tau'_1$, and $\tau'_2$ are trajectories shown in Fig. \ref{fig:hconf}. Let $\phi_1$ and $\phi_2$ be the corresponding a-traces obtained according to the abovementioned method for encoding the actions as numeric variables. Then, according to the modified notion of closeness, we will have $\phi_1 \not\approx_{(\tau,\epsilon)}^e \phi_2$.}
\end{example}

\subsection{Step-wise Approximate Refinement}
In order to apply our notion of conformance in a step-wise refinement trajectory, 
one needs to compose approximate conformance relations. 
This can be achieved through the following notion of semi-transitivity.

\begin{definition}[Semi-Transitivity] \label{def:tran}
Let $\mathscr{H}_1$, $\mathscr{H}_2$, and $\mathscr{H}_3$ be three arbitrary hybrid I/O automata such that $\mathscr{H}_2 \approx_{(\tau_1,\epsilon_1)} \mathscr{H}_1$ and $\mathscr{H}_3 \approx_{(\tau_2,\epsilon_2)} \mathscr{H}_2$ for some $\tau_1,\tau_2,\epsilon_1,\epsilon_2 > 0$. Then, the conformance relation is \emph{semi-transitive} if  $\mathscr{H}_3 \approx_{(\tau_1+\tau_2,\epsilon_1+\epsilon_2)} \mathscr{H}_1$.
\end{definition}

\begin{theorem} The extended conformance relation is semi-transitive.
\end{theorem}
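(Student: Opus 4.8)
The plan is to unfold the definitions and chain the witnesses produced by the two given conformance relations, tracking how the error bounds accumulate. Concretely, fix a solution pair $(u, y_1)$ of $\mathscr{H}_1$. Since $\mathscr{H}_2 \approx_{(\tau_1,\epsilon_1)}^e \mathscr{H}_1$, there is a solution pair $(u, y_2)$ of $\mathscr{H}_2$ with $y_1 \approx_{(\tau_1,\epsilon_1)}^e y_2$. Applying $\mathscr{H}_3 \approx_{(\tau_2,\epsilon_2)}^e \mathscr{H}_2$ to this same $(u, y_2)$ yields a solution pair $(u, y_3)$ of $\mathscr{H}_3$ with $y_2 \approx_{(\tau_2,\epsilon_2)}^e y_3$. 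The remaining task is to show $y_1 \approx_{(\tau_1+\tau_2,\epsilon_1+\epsilon_2)}^e y_3$, after which $(u, y_3)$ is the required witness for $\mathscr{H}_3 \approx_{(\tau_1+\tau_2,\epsilon_1+\epsilon_2)}^e \mathscr{H}_1$.

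First I would prove the pointwise ``triangle'' step: if $y_1 \approx_{(\tau_1,\epsilon_1)}^e y_2$ and $y_2 \approx_{(\tau_2,\epsilon_2)}^e y_3$ then $y_1 \approx_{(\tau_1+\tau_2,\epsilon_1+\epsilon_2)}^e y_3$. Take $(t,i) \in \mathrm{dom}(y_1)$ with $t \le T$, $i \le J$. By clause 1 of closeness there is $(s,j) \in \mathrm{dom}(y_2)$ with $|t-s| \le \tau_1$ and $\|y_1(t,i) - y_2(s,j)\|_e \le \epsilon_1$; note $s \le t + \tau_1$, and since the time horizon is tested only up to $T$ I would remark that $j \le J$ holds because the number of jumps along a common hybrid time domain is bounded (or, if needed, restrict attention to the relevant prefix). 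Then clause 1 applied to $(s,j)$ and $y_2 \approx_{(\tau_2,\epsilon_2)}^e y_3$ gives $(r,k) \in \mathrm{dom}(y_3)$ with $|s-r| \le \tau_2$ and $\|y_2(s,j) - y_3(r,k)\|_e \le \epsilon_2$. The triangle inequality for $|\cdot|$ gives $|t-r| \le \tau_1+\tau_2$. For the value component I would invoke the (sub)additivity of $\|\cdot\|_e$: either all three coordinates agree on being $\infty$ or not coordinate-wise, in which case $\|\cdot\|_e$ reduces to the Euclidean norm and the ordinary triangle inequality yields $\|y_1(t,i) - y_3(r,k)\|_e \le \epsilon_1 + \epsilon_2$; or some action coordinate is $\infty$ in exactly one of $y_1(t,i)$, $y_3(r,k)$ — but then it must also have differed from $y_2(s,j)$ on one of the two legs, forcing $\epsilon_1 = \infty$ or $\epsilon_2 = \infty$, contradicting the finiteness of the given bounds (using the convention that $\epsilon_1,\epsilon_2$ are real). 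The symmetric clause 2 is handled identically with the roles of $y_1$ and $y_3$ swapped.

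The main obstacle I anticipate is not the real-valued triangle inequality but the bookkeeping around the extended norm and the hybrid time domain: I need a clean lemma stating that $\|a - c\|_e \le \|a - b\|_e + \|b - c\|_e$ holds whenever the right-hand side is finite, which requires carefully using the stipulated conventions ($\infty$ larger than every real, $\infty - \infty = 0$) and the coordinatewise ``one-is-$\infty$'' trigger in \eqref{eq:normExt}; the key observation is that the $\infty$-valued action coordinates behave like a discrete equality constraint that composes transitively. A secondary subtlety is ensuring the index bounds $i,j,k \le J$ propagate through the two applications of closeness, and that the same input trace $u$ can legitimately be reused as the input component of solution pairs of $\mathscr{H}_2$ and then $\mathscr{H}_3$ — this is immediate from Definition~\ref{def:conformance}/\ref{def:modifConformance}, since conformance quantifies over all solution pairs with a shared input. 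Once the norm lemma is isolated, the rest is a direct diagram chase.
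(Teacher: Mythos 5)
Your proposal is correct and follows essentially the same route as the paper's own proof: chain the two witnesses $(u,y_2)$ and $(u,y_3)$ for a fixed $(u,y_1)$, observe that finiteness of $\epsilon_1$ and $\epsilon_2$ forces the action coordinates to agree so that $\|\cdot\|_e$ collapses to the ordinary Euclidean norm, and then conclude by the triangle inequality in both time and value. You are in fact slightly more careful than the paper, which only writes out the first clause of the closeness definition and does not discuss the propagation of the jump-index bound.
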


\begin{proof}
For the proof, we first formally specify the assumptions and the thesis of the theorem; the latter is further refined and proven in a number of steps:

\begin{enumerate}
\item
Assumption: $\mathscr{H}_1 \approx_{(\tau_1,\epsilon_1)}^e \mathscr{H}_2$ and $\mathscr{H}_2 \approx_{(\tau_2,\epsilon_2)}^e \mathscr{H}_3$.
\\
Claim: $\mathscr{H}_1 \approx_{(\tau_1+\tau_2,\epsilon_1+\epsilon_2)}^e \mathscr{H}_3$.

\item
Assumption Rewriting (according to Definition \ref{def:modifCloseness}): 
\begin{eqnarray}
\forall (u,y_1) \in \mathscr{H}_1: \exists (u,y_2) \in \mathscr{H}_2 \text{ such  that } y_1 \approx_{(\tau_1,\epsilon_1)}^e y_2, \\
\forall (u,y_2) \in \mathscr{H}_2: \exists (u,y_3) \in \mathscr{H}_3 \text{ such that } y_2 \approx_{(\tau_2,\epsilon_2)}^e y_3.
\end{eqnarray}

Claim: 
$\forall (u,y_1) \in \mathscr{H}_1: \exists (u,y_3) \in \mathscr{H}_3 \text{ such  that } y_1 \approx_{(\tau_1+\tau_2,\epsilon_1+\epsilon_2)}^e y_3$.

\item

Assumption Rewriting: 
\begin{multline}
\label{eq:firstAssum}
\forall (u,y_1) \in \mathscr{H}_1: \exists (u,y_2) \in \mathscr{H}_2 \text{ such  that } \\
 \forall(t,i) \in \mathrm{dom}(y_1): \exists(s,j) \in \mathrm{dom}(y_2) \text{ such that } |t-s| \leq \tau_1 \text{ and } \|y_1(t,i) -y_2(s,j)\|_e \leq \epsilon_1, 
\end{multline}
and also
\begin{multline}
\label{eq:secondAssum}
\forall (u,y_2) \in \mathscr{H}_2: \exists (u,y_3) \in \mathscr{H}_3 \text{ such  that } \\
 \forall(s,j) \in \mathrm{dom}(y_2): \exists(v,k) \in \mathrm{dom}(y_3) \text{ such that } |v-s| \leq \tau_2 \text{ and } \|y_2(s,j) -y_3(v,k)\|_e \leq \epsilon_2, 
\end{multline}

conditioned on that $t\leq T$ and $i \leq J$.

Claim: 
\begin{multline}
\label{eq:claim1}
\forall (u,y_1) \in \mathscr{H}_1: \exists (u,y_3) \in \mathscr{H}_3 \text{ such  that }  \\
 \forall(t,i) \in \mathrm{dom}(y_1): \exists(v,k) \in \mathrm{dom}(y_3) \text{ such that } |v-t| \leq \tau_1+\tau_2 \text{ and } \|y_1(t,i) -y_3(v,k)\|_e \leq \epsilon_1+\epsilon_2.
\end{multline}
\end{enumerate}

Now, we proceed with  the proof.
Fix an arbitrary $(u,y_1) \in \mathscr{H}_1$; it follows from \eqref{eq:firstAssum} that  $\exists y_2$ such that $(u,y_2) \in \mathscr{H}_2$ for which $\forall(t,i) \in \mathrm{dom}(y_1)$  there exists $(s,j) \in \mathrm{dom}(y_2)$ such that
\begin{equation}
\label{eq:y12timeCloseness}
 |t-s| \leq \tau_1 
\end{equation}
and
\begin{equation}
\label{eq:y12closeness}
\|y_1(t,i) - y_2(s,j)\|_e \leq \epsilon_1
\end{equation}
Similarly, from \eqref{eq:secondAssum}, it follows that considering $y_2$, $\exists y_3$ such that $(u,y_3) \in \mathscr{H}_3$ for which $\forall(s,j) \in \mathrm{dom}(y_2)$  there exists $(v,k) \in \mathrm{dom}(y_3)$ such that
\begin{equation}
\label{eq:y23timeCloseness}
 |s-v| \leq \tau_2
\end{equation}
and
\begin{equation}
\label{eq:y23closeness}
\|y_2(s,j) - y_3(v,k)\|_e \leq \epsilon_2.
\end{equation}

Now, fix an arbitrary $\forall(t,i) \in \mathrm{dom}(y_1)$ and assume $(s,j)$ and $(v,k)$ are the corresponding points as specified above.
Since $\epsilon_1$ has a bounded value (i.e., $\epsilon_1 < \infty$), we can conclude from \eqref{eq:y12closeness} that the value of those variables in $y_1(t,i)$ and $y_2(s',j')$ which are associated with the actions is exactly the same. This is because that otherwise $\|y_1(t,i) - y_2(s',j')\|_e = \infty > \epsilon_1$ (according to \eqref{eq:normExt}) which violates \eqref{eq:y12closeness}. 
The same results hold for $y_2(s',j')$ with respect to $y_3(v,k)$.
As a consequence, the norm operator in \eqref{eq:y12closeness} and  \eqref{eq:y23closeness} is reduced to the normal Euclidean norm which yields the following relation (according to the triangular property of the Euclidean norm)
\begin{equation}
\|y_1(t,i) - y_3(v,k)\|_e \leq \epsilon_1 + \epsilon_2.
\end{equation}
Besides, from \eqref{eq:y12timeCloseness} and \eqref{eq:y23timeCloseness} it is concluded that 
\begin{equation}
 |t-v| \leq \tau_1 + \tau_2
\end{equation}
Consequently, the claim declared in \eqref{eq:claim1} is achieved.
\end{proof}

\section{Conclusions and Discussion}\label{sec:conc}
In this paper, we proposed a notion of conformance parameterized by conformance bounds in time and value. In addition to approximate comparison of (sampled) continuous signals, our notion allows for approximate matching of discrete signals by allowing them to happen within specified time margins. In this sense, it consolidates two earlier notions of hybrid conformance by Abbas and Fainekos \cite{Abbas14}  and by van Osch \cite{vanOsch2006}.

In the remainder of this section, we present the directions of our ongoing research. 
As an immediate next step, we would like to prove the conservative extension property of the extended conformance relation compared to the other conformance relations, namely hioco and hconf. This will generalize our earlier result in \cite{Mohaqeqi2014}.
Moreover, we envisage other important issues such as sound sampling rates \cite{Mohaqeqi2016}, test-case generation algorithms, coverage \cite{Dreossi2015,Fainekos2015}, links to temporal and modal logics \cite{Deshmukh15}, and compositionality \cite{Abbas2015,Benes2015}.



\bibliographystyle{eptcs}
\bibliography{generic}
\end{document}